\def\input@path{{\string"H:/My Documents/XiaojingFiles/BSDE&StochasticControl/ProjectBarrierOption/Report/Reportpostfinal/\string"/}}
\providecommand{\tabularnewline}{\\}
\theoremstyle{plain}
\newtheorem{thm}{\protect\theoremname}
  \theoremstyle{remark}
  \newtheorem{rem}{\protect\remarkname}
  \theoremstyle{plain}
  \newtheorem{lem}{\protect\lemmaname}
  \providecommand{\lemmaname}{Lemma}
  \providecommand{\remarkname}{Remark}
\providecommand{\theoremname}{Theorem}
\begin{document}

\title{Deep-learning based numerical BSDE method for barrier options}

\author{Bing Yu\thanks{Corporate Model Risk, Wells Fargo, bing.yu@wellsfargo.com},
Xiaojing Xing\thanks{Corporate Model Risk, Wells Fargo}, Agus Sudjianto\thanks{Corporate Model Risk, Wells Fargo} }
\maketitle
\begin{abstract}
As is known, an option price is a solution to a certain partial differential
equation (PDE) with terminal conditions (payoff functions). There
is a close association between the solution of PDE and the solution
of a backward stochastic differential equation (BSDE). We can either
solve the PDE to obtain option prices or solve its associated BSDE.
Recently a deep learning technique has been applied to solve option
prices using the BSDE approach. In this approach, deep learning is
used to learn some deterministic functions, which are used in solving
the BSDE with terminal conditions. In this paper, we extend the deep-learning
technique to solve a PDE with both terminal and boundary conditions.
In particular, we will employ the technique to solve barrier options
using Brownian motion bridges. 
\end{abstract}

\section{Introduction}

A barrier option is a type of derivative where the payoff depends
on whether the underlying asset has breached a predetermined barrier
price. For a simple barrier case, an analytical pricing formula is
available (see \cite{JohnHull}). Because barrier options have additional
conditions built in, they tend to have cheaper premiums than comparable
options without barriers. Therefore, if a trader believes the barrier
is unlikely to be reached, they may prefer to buy a knock-out barrier
option for a lower premium. There are different methods to solve option
prices, ranging from an analytical solution, solving PDE numerically,
and Monte Carlo simulations. Recently, a different approach using
machine learning has been proposed. 

Using machine learning to solve PDE was studied in \cite{EHJ}. In
this work, a new method was proposed for solving parabolic partial
differential equations with terminal conditions, which we will call
the standard framework hereafter. In this new method, the PDE is formulated
as a stochastic control problem through a Feymann-Kac formula. In
this formulation, a connection between PDE for option prices and BSDE
is made. The option price is obtained by solving the BSDE rather than
solving PDE. The solution to the BSDE is represented by two deterministic
functions. One innovation (shown in \cite{EHJ}) is the use of a neural
network and deep-learning technique to learn these deterministic functions.
The mathematical foundation of this approach is based on a Kolmogorov-Arnold
representation theorem. This theorem states that any continuous function
can be approximated by a finite composition of continuous functions
of a single variable. Cybenko (see\cite{Cybenko}) found that a feed-forward
neural network is a natural realization of the theorem and he provided
a concrete implementation using a sigmoid function. 

In addition to \cite{EHJ}, \cite{BEJ} extended the method to solve
fully non-linear PDE and second-order backward stochastic differential
equation. Other works related to this deep-learning method include
\cite{Raissi} and \cite{CMW}. In \cite{Raissi}, a different way
of simulating the processes in the forward-backward stochastic differential
equation (FBSDE) is proposed. Rather than using a neural network to
approximate the derivative of a PDE solution, the network is used
to directly approximate the PDE solution and the derivative is calculated
using automatic differentiation. A number of different choices for
building the neural network and learning structure and two new types
of structures are proposed in \cite{CMW} . These problems are in
the framework of a PDE with some terminal conditions. These PDE can
be solved by an equivalent BSDE. 

In the aforementioned standard framework, the PDE solved has no boundary
conditions. There are some works on PDE with free boundary conditions.
In these works, a BSDE is replaced by a reflected BSDE (RBSDE). A
penalty term is added to the loss function to take into account the
free boundary condition in order to solve the RBSDE. Again, machine
learning can be used in solving these problems. This approach is used
in \cite{FTT} to solve American options. Bermudan Swaptions is solved
by exercising the option at a boundary in \cite{WCSLS}. In our work,
we consider barrier options. We treat boundary conditions of barrier
options differently. Rather than using RBSDE with a penalty function
or exercise options at a boundary, we incorporated the boundary conditions
as terminal conditions. To our best knowledge, this approach has not
previously been done. 

In this paper, we organize as follows. In section 2, we present the
standard framework, which is designed for Cauchy problem. In section
3, we describe how we extend the standard framework to handle barrier
options, which corresponds to a Cauchy-Dirichlet problem. In section
4, we present numerical considerations and the results we obtained
from our experiments. Finally, we make some concluding remarks in
section 5.

\section{Basic method to solve BSDE by machine learning}

We briefly introduced the deep-learning-based numerical BSDE algorithm
proposed in \cite{EHJ}. We start from an FBSDE, which is first proposed
in \cite{PP}. 

\[
X_{t}=X_{0}+\int_{0}^{t}b_{s}(X_{s})ds+\int_{0}^{t}\sigma_{s}(X_{s})dW_{s}
\]

\[
Y_{t}=h(X_{T})+\int_{t}^{T}f_{s}(X_{s},Y_{s},Z_{s})ds-\int_{t}^{T}Z_{s}dW_{s}
\]
Here, $\{W_{s}\}_{0<s<T}$ is a Brownian motion and $h(X_{T})$ is
the terminal condition. The pair $(Y,Z)_{0<t<T}$ solves the BSDE.
It is known that there exists a deterministic function $u=u(t,x)$
such that $Y_{t}=u(t,X_{t})$, $Z_{t}=\nabla u(t,X_{t})\sigma_{t}(X_{t})$
and $u(t,x)$ solves a quasi-linear PDE. For both the forward and
backward process, we can use Euler scheme to approximate: 
\begin{equation}
X_{t_{i+1}}\approx X_{t_{i}}+b_{t_{i}}(X_{t_{i}})(t_{i+1}-t_{i})+\sigma_{t_{i}}(X_{t_{i}})(W_{t_{i+1}}-W_{t_{i}})\label{eq:forward}
\end{equation}

\begin{equation}
Y_{t_{i+1}}\approx Y_{t_{i}}-f_{t_{i}}(X_{t_{i}},Y_{t_{i}},Z_{t_{i}})(t_{i+1}-t_{i})+Z_{t_{i}}(W_{t_{i+1}}-W_{t_{i}})\label{eq:Backward}
\end{equation}
Note that we have made the backward process to be forward; this is
a commonly used technique in treating FBSDEs. This set of equations
has the following interpretation on a given path: $X_{t_{i}}$ is
the underlying price; $Y_{t_{i}}$ is the option price and $Z_{t_{i}}$
is related to the delta at time $t_{i}$. 

In the deep-learning-based numerical BSDE algorithm, a neural network
structure is used to approximate the term $Z_{t_{i}}$ at each time
step with parameter $\theta$. Starting from an underlying price $X_{0}$
at time $0$ and an initial guess $Y_{0,}Z_{0}$, we use equations
(\ref{eq:forward}) and (\ref{eq:Backward}) to calculate $X_{t_{i+1}}$
and $Y_{t_{i+1}}$ at every time step until the terminal time $T$.
At terminal time, the loss is given by $l(\theta,Y_{0},Z_{0})=E[(Y_{T}-h(X_{T}))^{2}]$.
A stochastic gradient descent method is used to minimize the loss
function by iterating to the optimal value of $Y_{0}$,$Z_{0}$ and
$\theta$. Note that the success of this idea relies on the fact that
the neural network can approximate non-linear function $Z_{t}(X_{t})$
well; this is guaranteed by the work of Cybenko (see \cite{Cybenko}).

This framework solves a PDE with Cauchy conditions (standard framework).
However, there are certain types of options that will correspond to
a PDE with Cauchy-Dirichlet conditions. For example, a barrier option
is a Cauchy-Dirichlet PDE problem. In this paper, we would consider
extending the standard framework to handle this case.

\section{Extension of basic method to solve barrier option}

Barrier options are options where the payoff depends on whether the
underlying asset's price reaches a certain level during a certain
period of time. These barrier options can be classified as either
knock-out options or knock-in options. A knock-out option ceases to
exist when the underlying asset price reaches a certain barrier. A
knock-in option comes into existence only when the underlying asset
price reaches a barrier. An up-and-out call is a regular call option
that ceases to exist if the asset price reaches a barrier level, $B$,
that is higher than the current asset price. An up-and-in call is
a regular call option that comes into existence only if the barrier
is reached. The down-in and down-out option are similarly defined.
Under the Black-Scholes Framework, assuming constant coefficient,
it is not hard to derive an analytical solution for these kinds of
barrier options. Therefore, we will use an analytical solution as
the benchmark.

We would like to start from the most general form of a Cauchy-Dirichlet
problem. As is well known, the Feynman-Kac formula has provided a
way of translating the problem of a partial differential equation
into a probabilistic problem. A Dirichlet condition needs to be translated
in a probabilistic way by stopping the underlying diffusion process
as it exits from a domain. 
\begin{thm}
(See \cite{Gobet} Chapter 4). Let $W$ be a Brownian motion. Assume
process $X_{t}$ satisfies: 
\[
X_{t}=X_{0}+\int_{0}^{t}b_{s}(X_{s})ds+\int_{0}^{t}\sigma_{s}(X_{s})dW_{s}
\]
where $b$ and $\sigma$ satisfy some usual regularity and boundedness
conditions. Let $D$ be a bounded domain in real space and define
$\tau^{t,x}=inf\{s>t,X_{s}^{t,x}\notin D\}$ as the first exit time
from domain $D$ by process $X$ started from $(t,x)$. Assume the
boundary $\partial D$ is smooth. Assume functions $r,g:[0,T]\times\bar{D}\rightarrow R$
are continuous. Then the solution $u(t,x)$ of class $C^{1,2}$ of
the following PDE

\begin{equation}
\begin{cases}
\begin{array}{cc}
\partial_{t}u(t,x)+b(t,x)\partial_{x}u(t,x)+\frac{1}{2}\sigma(t,x)^{2}\partial_{xx}u(t,x)-r(t,x)u(t,x)=0, & t<T,x\in D\\
u(T,x)=g(T,x) & x\in\bar{D}\\
u(t,x)=g(t,x) & (t,x)\in[0,T]\times\partial D
\end{array}\end{cases}\label{eq:PDE-general}
\end{equation}
can be expressed by the following probabilistic representation

\[
u(t,x)=E[g(\tau^{t,x}\wedge T,X_{\tau^{t,x}\wedge T}^{t,x})e^{-\int_{t}^{\tau^{t,x}\wedge T}r(s,X_{s}^{t,x})ds}].
\]
\end{thm}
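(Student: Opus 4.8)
The plan is to apply It\^o's formula to the discounted process $s \mapsto u(s, X_s^{t,x})\,\Gamma_s$ and to use the PDE to annihilate the finite-variation (drift) part, leaving a stochastic integral whose expectation vanishes. First I would fix the starting point $(t,x)$ with $x \in D$ and introduce the discount factor $\Gamma_s = \exp(-\int_t^s r(v,X_v^{t,x})\,dv)$, noting $\Gamma_t = 1$ and $d\Gamma_s = -r(s,X_s^{t,x})\,\Gamma_s\,ds$. Applying the It\^o product rule to $M_s := u(s,X_s^{t,x})\,\Gamma_s$ and using the $C^{1,2}$ regularity of $u$, the drift of $M_s$ collects precisely the operator $\partial_t u + b\,\partial_x u + \tfrac12\sigma^2\partial_{xx}u - r u$ evaluated along the path. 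For $t \le s < \tau^{t,x}$ the process lies in the open set $D$, where by the first line of (\ref{eq:PDE-general}) this operator is zero; hence on $[t,\tau^{t,x}\wedge T)$ the drift of $M_s$ vanishes and only the martingale increment $\Gamma_s\,\sigma(s,X_s^{t,x})\,\partial_x u(s,X_s^{t,x})\,dW_s$ survives.

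Next I would integrate from $t$ to the stopping time $\tau^{t,x}\wedge T$ and take expectations. Since $D$ is bounded and $u \in C^{1,2}(\bar{D})$, the functions $u$ and $\partial_x u$ are bounded on the compact closure $\bar{D}$, while $r$ and $\sigma$ are bounded by hypothesis; therefore the integrand of the stochastic integral is bounded on the stochastic interval, the integral is a genuine martingale, and its expectation is zero. Combined with $M_t = u(t,x)$, this yields
\[
u(t,x) = E\!\left[ u\big(\tau^{t,x}\wedge T,\, X_{\tau^{t,x}\wedge T}^{t,x}\big)\,\Gamma_{\tau^{t,x}\wedge T}\right].
\]

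The remaining step is to identify the terminal value of $M$ with the data $g$. On the event $\{\tau^{t,x} < T\}$ the path exits $D$, so $X_{\tau^{t,x}}^{t,x} \in \partial D$ and the Dirichlet condition (third line of (\ref{eq:PDE-general})) gives $u(\tau^{t,x}, X_{\tau^{t,x}}^{t,x}) = g(\tau^{t,x}, X_{\tau^{t,x}}^{t,x})$; on the complementary event $\{\tau^{t,x} \ge T\}$ the process is stopped at $T$ with $X_T^{t,x} \in \bar{D}$, and the terminal condition (second line) gives $u(T,X_T^{t,x}) = g(T,X_T^{t,x})$. In both cases $u(\tau^{t,x}\wedge T, X_{\tau^{t,x}\wedge T}^{t,x}) = g(\tau^{t,x}\wedge T, X_{\tau^{t,x}\wedge T}^{t,x})$, and substituting into the displayed identity produces exactly the claimed representation.

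I expect the main obstacle to be the rigorous justification that the stochastic integral has zero expectation rather than merely being a local martingale: this is where the boundedness of the domain $D$ and of the coefficients, together with $C^{1,2}$ regularity up to $\bar{D}$, are essential, since they bound the integrand uniformly and legitimize the interchange of expectation and stochastic integration. A secondary technical point is the smoothness of $\partial D$, which guarantees that the exit occurs on the boundary where the Dirichlet data is prescribed and that It\^o's formula may be applied up to $\tau^{t,x}$; the single boundary instant $s = \tau^{t,x}$ contributes nothing to the $ds$-integral, so the vanishing of the drift on the open interval suffices.
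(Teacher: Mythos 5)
The paper does not prove this theorem at all---it is quoted as a known result from Gobet's book (Chapter 4)---so there is no internal proof to compare against; your argument is the standard Feynman--Kac/It\^o proof that the cited reference uses, and its structure (It\^o on the discounted process $u(s,X_s^{t,x})\Gamma_s$, annihilation of the drift by the PDE, optional stopping, identification of the stopped value with $g$ via the Cauchy and Dirichlet data) is correct. One technical caveat: the hypothesis is only that $u$ is of class $C^{1,2}$ (in the interior), not $C^{1,2}$ up to $\bar D$, so you are not entitled to the boundedness of $\partial_x u$ on $\bar D$ that you invoke to make the stochastic integral a true martingale. The standard repair is to localize: apply It\^o up to the exit time of a compactly contained subdomain $D_\varepsilon \subset\subset D$ (and up to time $T-\varepsilon$), where $u$ and its derivatives are bounded, obtain the identity there, and then let $\varepsilon \to 0$ using the continuity of $u$ on $[0,T]\times\bar D$, the convergence of the localized exit times to $\tau^{t,x}$ (this is where the smoothness of $\partial D$ and non-degeneracy enter), and dominated convergence. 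With that refinement your proof is complete and coincides with the one the paper points to.
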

\begin{rem}
The domain $D$ is assumed to be bounded. In fact, it is enough for
the boundary to be compact.
\end{rem}
$u(t,x)$ is an average over all paths start at $(t,x).$ If the path
never exits the domain $D$ (i.e., $\tau^{t,x}\geq T$), we use the
value at terminal $g(T,X_{T}^{t,x})$. If the path exits the domain
(i.e., $\tau^{t,x}<T$), we use the boundary value at the point the
path exits, (i.e., $g(\tau^{t,x},X_{\tau^{t,x}}^{t,x})$). We can
write it this way:

\begin{eqnarray}
u(t,x)&=&E[g(T,X_{T}^{t,x})e^{-\int_{t}^{T}r(s,X_{s}^{t,x})ds}|\tau^{t,x}\geq T]P(\tau^{t,x}\geq T)\nonumber\\
&+&E[g(\tau^{t,x},X_{\tau^{t,x}}^{t,x})e^{-\int_{t}^{\tau^{t,x}}r(s,X_{s}^{t,x})ds}|\tau^{t,x}<T]P(\tau^{t,x}<T).
\end{eqnarray}

An up-out call barrier option is a special case of the above general
case with domain, terminal condition, and boundary condition specifically
defined. In up-out call option pricing, domain is $D=\{x<B\}$, terminal
condition is $g(T,x)=(x-K)^{+}1_{\{x<B\}}$, and boundary condition
is $g(t,x)=0$ when $x\geq B$. We can also assume constant drift
$b,$ volatility $\sigma$, and interest rate $r$ for simplicity.
Then, the second term in equation (4) can be dropped since the boundary
condition is zero; the probabilistic representation then becomes: 

\begin{eqnarray}
u(t,x)&=&E[g(T,X_{T}^{t,x})e^{-r(T-t)}|\tau^{t,x}\geq T]P(\tau^{t,x}\geq T)
\end{eqnarray}To calculate this probabilistic representation, we write it into conditional
expectation of the terminal value $X_{T}^{t,x}$. By using Lemma 2
shown in the Appendix, we have:

\begin{eqnarray}
u(t,x)&=&E\{g(T,X_T^{t,x})e^{-r(T-t)}P(\tau^{t,x}\ge T|X_T^{t,x})\}
\end{eqnarray}

The term $P(\tau^{t,x}\geq T|X_{T}^{t,x})$ is the probability that,
given the underlying value at time $t$ and $T$, the underlying process
never crosses the barrier in between. We can apply a commonly used
technique in dealing with the simulation of a stopped process - a
Brownian motion bridge. Given the start point and end point of a geometric
Brownian motion, the probability that the process never exceeds a
certain level in between can be explicitly given as stated in the
following Lemma.
\begin{lem}
(Brownian Motion Bridge). Assume $X_{t}$ follows $dX_{t}/X_{t}=bdt+\sigma dW_{t}$
and define 
\[
\xi(y)=exp[-\frac{2*ln(y/X_{t})*ln(y/X_{t+\Delta t})}{\sigma^{2}\Delta t}],
\]
then

(1) If $B>max(X_{t},X_{t+\Delta t}),$then $P[max_{t<s<t+\Delta t}X_{s}<B]=1-\xi(B)$

(2)If $B<min(X_{t},X_{t+\Delta t}),$then $P[min_{t<s<t+\Delta t}X_{s}>B]=1-\xi(B)$
\end{lem}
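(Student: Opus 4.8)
The plan is to reduce everything to the log-scale, where $X$ becomes an arithmetic Brownian motion, and then invoke the classical reflection principle for a Brownian bridge. First I would apply It\^o's formula to $L_s := \ln X_s$, obtaining $dL_s = (b - \tfrac12\sigma^2)\,ds + \sigma\,dW_s$, so that $L$ is a Brownian motion with constant drift $\mu = b - \tfrac12\sigma^2$ and diffusion coefficient $\sigma$. Since $x \mapsto \ln x$ is strictly increasing, the event $\{\max_{t<s<t+\Delta t} X_s < B\}$ is identical to $\{\max_{t<s<t+\Delta t} L_s < \ln B\}$, and the hypothesis $B > \max(X_t, X_{t+\Delta t})$ becomes $\ln B > \max(L_t, L_{t+\Delta t})$. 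This converts claim (1) into a statement about the running maximum of an arithmetic Brownian motion pinned at its two endpoints $L_t$ and $L_{t+\Delta t}$.

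The second step is to observe that once we condition on both endpoints $L_t = a$ and $L_{t+\Delta t} = c$, the drift $\mu$ becomes irrelevant: the conditioned process on $[t, t+\Delta t]$ is a Brownian bridge whose law does not depend on $\mu$. I would justify this either by a Girsanov change of measure --- the Radon--Nikodym density relating the drifted and driftless laws is a function of the endpoints alone, so it cancels in the conditional distribution --- or by directly checking that the Gaussian transition densities produce the same bridge measure. This is the step I expect to be the main obstacle, since it is where the disappearance of $b$ from the final formula $\xi(B)$ must be accounted for rigorously; everything downstream is a driftless computation.

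With the process reduced to a driftless Brownian bridge, the final step is the reflection principle. For standard Brownian motion started at $a$ with variance rate $\sigma^2$, the reflection principle gives the joint law of exceeding level $m$ and terminating near $c$ as the transition density evaluated at the reflected endpoint $2m - c$. Writing $\phi(x) = (2\pi\sigma^2\Delta t)^{-1/2}\exp(-x^2/(2\sigma^2\Delta t))$ for the transition density over time $\Delta t$, the conditional exceedance probability is the ratio
\[
P\big(\max_{[t,t+\Delta t]} L \ge m \mid L_t=a,\, L_{t+\Delta t}=c\big)
= \frac{\phi(2m - a - c)}{\phi(c-a)}.
\]
Expanding the exponents and using the algebraic identity $(2m-a-c)^2 - (c-a)^2 = 4(m-a)(m-c)$, this ratio collapses to $\exp\!\big(-2(m-a)(m-c)/(\sigma^2\Delta t)\big)$. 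Substituting $m = \ln B$, $a = \ln X_t$, $c = \ln X_{t+\Delta t}$ and noting $(m-a)(m-c) = \ln(B/X_t)\ln(B/X_{t+\Delta t})$ yields exactly $\xi(B)$, so the complementary probability is $1 - \xi(B)$, proving (1).

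For (2), I would apply the identical argument to $-L$, or equivalently reflect the level: the event that the running minimum of $X$ stays above $B$ (with $B < \min(X_t, X_{t+\Delta t})$) becomes the event that the running maximum of the negated bridge stays below $-\ln B$, and the same reflection computation returns $1 - \xi(B)$ with the roles of minimum and maximum exchanged. I would remark that $\xi(B)$ is symmetric in $X_t$ and $X_{t+\Delta t}$ and free of the drift $b$, both of which serve as consistency checks on the cancellation established in the second step.
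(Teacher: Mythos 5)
The paper itself offers no proof of this lemma: it is quoted as a classical fact about simulating stopped diffusions (the ``Brownian bridge correction''), so there is nothing in the text to compare your argument against line by line. Judged on its own, your proof is correct and is the standard derivation. The three steps are all sound: (i) passing to $L_s=\ln X_s$ turns $X$ into an arithmetic Brownian motion with drift $b-\tfrac12\sigma^2$ and converts the barrier event into a level-crossing event for $L$, with the hypothesis $B>\max(X_t,X_{t+\Delta t})$ guaranteeing the configuration $m\ge\max(a,c)$ needed later; (ii) the drift does drop out upon conditioning on both endpoints, and your Girsanov justification is the right one --- the Radon--Nikodym density $\exp\bigl(\tfrac{\mu}{\sigma^2}(L_{t+\Delta t}-L_t)-\tfrac{\mu^2}{2\sigma^2}\Delta t\bigr)$ is a function of the endpoints alone, hence cancels in the bridge law (this is also why $b$ is absent from $\xi$); (iii) the reflection principle gives the joint density of $\{\max L\ge m,\ L_{t+\Delta t}\in dc\}$ as $\phi(2m-a-c)\,dc$, and the identity $(2m-a-c)^2-(c-a)^2=4(m-a)(m-c)$ reduces the ratio to $\exp\bigl(-2(m-a)(m-c)/(\sigma^2\Delta t)\bigr)=\xi(B)$. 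The symmetry argument for part (2) is likewise fine. The only point worth tightening in a formal write-up is the meaning of conditioning on the null event $\{L_{t+\Delta t}=c\}$, which should be phrased via densities or regular conditional probabilities --- exactly as your ratio $\phi(2m-a-c)/\phi(c-a)$ implicitly does. Note also that the probability in the lemma statement is to be read as conditional on $(X_t,X_{t+\Delta t})$, which is how the paper uses it in passing from equation (6) to (7); your proof computes precisely that conditional quantity.
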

{} 

Taking the use of Lemma 1 and plugging in the terminal condition for
barrier option, the probabilistic representation becomes:

\begin{eqnarray} u(t,x)=E\{[(X_T^{t,x}-K)^+1_{\{X_T^{t,x}<B\}}e^{-r(T-t)}][1-e^{-\frac{2ln(B/x)*ln(B/X_T^{t,x})}{\sigma^2(T-t)}}] \}\end{eqnarray}The
only random variable in the expectation is the underlying at terminal
$X_{T}^{t,x}$ and, therefore, we can view this as a vanilla option
with a relatively complicated payoff. Then, we can use the standard
framework to solve this problem. The BSDE we use in our algorithm
is: 

\[
X_{T}=x+\int_{t}^{T}bX_{s}ds+\int_{t}^{T}\sigma X_{s}dW_{s}
\]

\[
Y_{t}=h(X_{T})-\int_{t}^{T}rY_{s}ds-\int_{t}^{T}Z_{s}dW_{s}
\]
where $h(X)=[(X-K)^{+}1_{\{X<B\}}][1-e^{-\frac{2ln(B/x)ln(B/X)}{\sigma^{2}(T-t)}}]$.
The Euler scheme and loss function described in the standard framework
(Section 2) can be applied. For completeness, the PDE corresponding
to this BSDE is:

\[
\begin{cases}
\begin{array}{c}
\partial_{t}u(t,x,x_{0})+b\partial_{x}u(t,x,x_{0})+\frac{1}{2}\sigma^{2}\partial_{xx}u(t,x,x_{0})-ru(t,x,x_{0})=0\\
u(T,x,x_{0})=h(x,x_{0})
\end{array}\end{cases}
\]
where $h(x,x_{0})=[(x-K)^{+}1_{\{x<B\}}][1-e^{-\frac{2ln(B/x_{0})ln(B/x)}{\sigma^{2}(T-t)}}]$.
Note that the value $u(t,x_{0},x_{0})$ is the price we want.

\section{Numerical consideration}

\subsection{Choosing basic neural network structure}

Similar to other works mentioned before, a neural network is used
to approximate the term $Z_{t_{i}}$at each time step. The structure
of the neural net can significantly influence the convergence. In
the original work in \cite{EHJ}, at each time step, the parameter
set $\theta_{t_{i}}$ are different (i.e., building N different neural
networks for the N time steps): 
\begin{equation}
Z_{t_{i}}=net^{\theta_{t_{i}}}(X_{t_{i}}),i=0...N.\label{eq:Weinannet}
\end{equation}
In our work, we choose to use a type of merged neural network, as
proposed by \cite{CMW}. Rather than building N neural networks, we
use one neural network for all the N time steps, so $\theta$ does
not need subscript $t_{i}$. We also need to add one additional dimension
to the underlying; that is, the neural network also needs to take
the time to maturity as an input for the merged neural network structure
to work: 
\begin{equation}
Z_{t_{i}}=net^{\theta}(X_{t_{i}},T-t_{i}),i=0...N.\label{eq:ournet}
\end{equation}
Finally, we choose to use Elu as the base function and use one hidden
layer with 20 neurons. Using more hidden layers or using more neurons
in a hidden layer does not apparently improve the convergence. Learning
rates are chosen differently in each test. All weights in the network
are initialized using normal or uniform random variable without pre-training.
We can achieve quite accurate results by using this structure in most
cases. However, for some isolated cases, a batch normalization technique
is needed to improve the convergence of the algorithm. We discuss
details on improving the convergence later.

To improve computational efficiency, we use variance discretization
rather than the commonly used time discretization. The number of time
steps needed will roughly depend on the total variance of the model;
in our case, it is the square of volatility times maturity time, $\sigma^{2}T$
. For cases where volatility is small, we can take a larger time step;
in cases where volatility is large, we take a smaller time step to
improve accuracy. As a rule of thumb, we use the following formula
to determine time step $timestep\#=max(80,\frac{\sigma^{2}T}{0.025})$.

To improve efficiency further, we use different learning rates. At
the beginning of the training, we use larger learning rates to search
for solutions that are close to the optimal point. When solutions
become relatively stable, we decrease the learning rate to zoom in
on the optimal solution. 

Finally, we used a stopping criteria to stop the training when a change
of the last 50 iteration average price is less than a certain small
amount. We also insist on a minimum iteration number that training
needs to run and a maximum iteration number where the program should
stop.

\subsection{Test result summary\label{sub:4.3-Test-Result-summary}}

\subsubsection{Test overview \label{sub:4.3.1-Test-overview}}

We ran the numerical experiment for a wide range of inputs for an
up-out call option. Namely, we vary the length of maturity, underlying
price, volatility, and barrier value. For simplicity, we set the interest
rate and drift to both be zero. In total, we tested 72 cases (see
Table \ref{tab:Barrier-Option-Info}). For all tests, we used the
tensorflow Adam optimizer for the stochastic gradient descent method.
We ran these 72 cases in three different settings, as shown in Table
\ref{tab:Setting-of-Test}. In the first setting, we used brutal force
of 8,000 iterations to ensure convergence. However, some isolated
results are still not converged. In the second setting, we introduced
batch normalization at every layer to improve convergence. In the
last setting, we only use batch normalization at the input layer.
In the following section, we will discuss the results in detail.

\begin{table}[H]
\caption{Barrier Option Information\label{tab:Barrier-Option-Info}}
\begin{tabular}{|c|c|c|c|c|c|c|c|}
\hline 
Option type &
Strike &
Underlying &
Maturity &
Rate &
Drift &
Volatility &
Barrier\tabularnewline
\hline 
\hline 
Up-out call &
23 &
17,22,27,32 &
0.5, 2 &
0 &
0 &
0.4,0.8,1.2 &
40,60,100\tabularnewline
\hline 
\end{tabular}

\end{table}

\begin{table}[H]
\caption{Setting of tests\label{tab:Setting-of-Test}}
\begin{tabular}{|c|c|c|c|}
\hline 
 &
Test 1 setting &
Test 2 setting &
Test 3 setting\tabularnewline
\hline 
\hline 
Layers &
\multicolumn{3}{c|}{3}\tabularnewline
\hline 
Neurons &
\multicolumn{3}{c|}{d+20}\tabularnewline
\hline 
Base function &
\multicolumn{3}{c|}{Elu}\tabularnewline
\hline 
Initial learning rate &
0.01 &
0.02 &
0.02\tabularnewline
\hline 
LR decay factor &
0.5 &
0.5 &
0.5\tabularnewline
\hline 
LR decay frequency &
1,500 &
500 &
1,000\tabularnewline
\hline 
Maximum iteration &
8,000 &
1,500 &
3,000\tabularnewline
\hline 
Stopping criteria &
No &
Price change<0.002 &
Price change<0.005\tabularnewline
\hline 
Minimum iteration &
8,000 &
750 &
1,500\tabularnewline
\hline 
Time step &
\multicolumn{3}{c|}{max(80,Var/0.025)}\tabularnewline
\hline 
Batch size &
\multicolumn{3}{c|}{512}\tabularnewline
\hline 
Batch normalization &
No &
At every layer &
At input layer\tabularnewline
\hline 
\end{tabular}

\end{table}

\subsubsection{Test result\label{sub:4.3.2-Wide-range-1=0000262}}

In this section, we present the results for 72 cases tested. In test
1, we ran 8,000 iterations. As Table \ref{tab:Wide-range-grid-stat}
indicates, on average the results differ from analytical solutions
by 4.5 cents and relative differences of 1.21\%. In approximately
50\% of cases, the pricing difference is less than a penny, while
there is a relative error of 0.4\%. Some isolated cases, shown in
Table \ref{tab:Result-of-isolated-case}, have large differences.
To improve convergence, we applied batch normalization technique.

Batch normalization technique is proposed in \cite{BN}. When we have
a neural network, the change in the input distribution at each layer
presents a problem because parameters need to continuously adapt to
a new distribution. This is a phenomenon known as covariate shift.
Eliminating internal covariate shift provides faster training and
batch normalization is the mechanism to do so. Batch normalization
accomplishes this via a normalization step that fixes the means and
variances of input distribution. We would like to note that, when
adding a batch normalization technique in tests to improve performance
in our test, we add additional trainable variables; these trainable
parameters within batch normalization are different among different
time steps (i.e., the network will still depend on $t_{i}$). 
\begin{equation}
Z_{t_{i}}=net^{\theta,\beta_{t_{i}}}(X_{t_{i}},T-t_{i}),i=0...N\label{eq:BN-Merge-Net}
\end{equation}
Here $\beta_{t_{i}}$ are the parameters introduced in batch normalization. 

Test 2, shown in Table \ref{tab:Wide-range-grid-stat}, are results
of applying batch normalization at every layer. Comparing the results
from test 1 and those from test 2, we can see significant improvement
in both absolute differences and relative differences when batch normalization
is applied. In addition, those isolated points where they failed to
converge in test 1 now converge nicely, with less than 1\% relative
differences, as shown in Table \ref{tab:Result-of-isolated-case}. 

When doing batch normalization, we can choose to apply it at every
layer or we can apply it only at the input layer. Whether we apply
batch normalization at each layer or just the input layer, the results
are similar, as shown in Table \ref{tab:Wide-range-grid-stat}. However,
computation efficiency is very different as we will show later. 

\begin{table}[H]
\caption{Grid test result - error statistics\label{tab:Wide-range-grid-stat}}
\begin{tabular}{|c|c|c|c|c|c|c|}
\hline 
Statistics &
Test 1 rel &
Test 2 rel  &
Test 3 rel  &
Test 1 abs &
Test 2 abs  &
Test 3 abs \tabularnewline
\hline 
\hline 
Average &
1.21\% &
0.57\% &
0.56\% &
0.0452 &
0.0099 &
0.0086\tabularnewline
\hline 
STD &
2.56\% &
0.54\% &
0.53\% &
0.1399 &
0.0116 &
0.0093\tabularnewline
\hline 
25\% quantile &
0.24\% &
0.21\% &
0.13\% &
0.0017 &
0.0017 &
0.0013\tabularnewline
\hline 
Median &
0.41\% &
0.39\% &
0.48\% &
0.0043 &
0.0059 &
0.0047\tabularnewline
\hline 
75\% quantile &
1.04\% &
0.70\% &
0.76\% &
0.0174 &
0.0133 &
0.0132\tabularnewline
\hline 
\end{tabular}

\end{table}

\begin{table}[H]
\caption{Result of isolated test cases\label{tab:Result-of-isolated-case}}
\begin{tabular}{|c|c|c|c|c|c|}
\hline 
Maturity &
Underlying &
Volatility &
Barrier &
Test 1 rel error &
Test 2 rel error\tabularnewline
\hline 
\hline 
0.5 &
17 &
1.2 &
100 &
18\% &
0.65\%\tabularnewline
\hline 
0.5 &
22 &
0.8 &
100 &
4.09\% &
0.34\%\tabularnewline
\hline 
0.5 &
27 &
0.8 &
100 &
6.76\% &
0.28\%\tabularnewline
\hline 
0.5 &
32 &
0.8 &
100 &
9.03\% &
0.39\%\tabularnewline
\hline 
2 &
22 &
0.4 &
100 &
4.02\% &
0.18\%\tabularnewline
\hline 
2 &
27 &
0.4 &
100 &
6.42\% &
0.15\%\tabularnewline
\hline 
\end{tabular}

\end{table}

\subsubsection{Improving the efficiency\label{sub:4.3.4-Tests-for-3=0000264}}

As we have mentioned before, we can apply batch normalization at every
layer or apply it only at the input layer. When we apply it at each
layer, we increase the trainable variable but less iterations are
required to achieve convergence. On the other hand, if we apply it
only at the input layer, we have less trainable parameters but more
iterations are needed to achieve convergence. However, the overall
time needed to achieve convergence is less in the latter case, as
shown in Table \ref{tab:Efficiency-Result}. The overall running time
for applying batch normalization at only the input layer is almost
three times faster. 

\begin{table}[H]
\caption{Efficiency Results\label{tab:Efficiency-Result}}
\begin{tabular}{|>{\raggedright}p{5cm}|c|c|}
\hline 
Indicator &
Test 2 result &
Test 3 result\tabularnewline
\hline 
\hline 
Average iteration step needed &
1000 &
1670\tabularnewline
\hline 
Time consumed per 200 training iterations &
15-20s &
4-5s\tabularnewline
\hline 
Approximate running time per case (including building time) &
200s &
75s\tabularnewline
\hline 
\end{tabular}

\end{table}

\section{Conclusion}

In this work, we solved a PDE with boundary conditions, using barrier
options as a concrete example. In this problem, the diffusion domain
is restricted by a barrier. By viewing the terminal condition probabilistically
(i.e., including a breaching probability of barrier), we are able
to recast this problem into the standard framework, namely a PDE with
terminal conditions. This PDE is solved by its equivalent BSDE using
a machine learning technique. We have completed extensive testing
using a wide range of market conditions and achieved good results
when comparing with known analytical results. In some isolated cases,
the batch normalization technique is needed to improve learning.

\section{Appendix}

For completeness, we present a technique Lemma that was used in section
3 for transition from equation (5) to (6).
\begin{lem}
Assume $X$ is a random variable, $A$ is an event. Then $E[f(X)|A]P(A)=E[f(X)P(A|X)]$
for any function $f(\cdot)$.\end{lem}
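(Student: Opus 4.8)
The plan is to show that both sides of the identity reduce to the same quantity, namely $E[f(X)1_{A}]$, where $1_{A}$ denotes the indicator of the event $A$. This recasts the problem as two elementary applications of the defining properties of conditional expectation, and no substantial estimate or construction will be required.

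First I would rewrite the left-hand side. By the definition of the expectation of $f(X)$ conditioned on the event $A$, we have $E[f(X)|A]=E[f(X)1_{A}]/P(A)$, valid when $P(A)>0$; the degenerate case $P(A)=0$ is trivial since both sides then vanish. Multiplying through by $P(A)$ yields immediately $E[f(X)|A]P(A)=E[f(X)1_{A}]$.

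Next I would treat the right-hand side, where the key point is to interpret $P(A|X)$ as the regular conditional probability $E[1_{A}|X]$, a $\sigma(X)$-measurable random variable, rather than as an elementary ratio. Since $f(X)$ is itself $\sigma(X)$-measurable, I can pull it inside the conditional expectation, writing $f(X)P(A|X)=f(X)E[1_{A}|X]=E[f(X)1_{A}|X]$. Taking the unconditional expectation and invoking the tower property gives $E[E[f(X)1_{A}|X]]=E[f(X)1_{A}]$, which coincides with the expression obtained for the left-hand side, and the identity follows.

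The only technical points to watch are integrability and notation. One should assume $f(X)$ is integrable so that every expectation above is well-defined, and one must fix the convention that $P(A|X)$ means $E[1_{A}|X]$. Granting these conventions, the main ``obstacle'' is purely a matter of bookkeeping: the result is a direct consequence of the $\sigma(X)$-measurability of $f(X)$ together with the tower property of conditional expectation.
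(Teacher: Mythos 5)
Your proposal is correct and follows essentially the same route as the paper: both arguments reduce each side to the common quantity $E[f(X)1_{A}]$, using the elementary definition of conditioning on an event for the left-hand side and the identification $P(A|X)=E[1_{A}|X]$ together with the tower property for the right-hand side. Your version is slightly more careful than the paper's in handling the degenerate case $P(A)=0$ and in stating the integrability convention, but the substance is identical.
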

\begin{proof}
First starting from the left side, we have $E[f(X)|A]P(A)=E[1_{A}E[f(X)|A]]=E[E[f(X)1_{A}|A]]=E[f(X)1_{A}]$.
Then, starting from the right side, $E[f(X)P(A|X)]=E[f(X)E[1_{A}|X]]=E[E[f(X)1_{A}|X]]=E[f(X)1_{A}]$.
We arrive at same quantity from both side; thus, the statement is
proved.

\end{proof}

\end{document}